\documentclass[12pt]{article}
\usepackage{amsmath, amssymb, amsthm}
\usepackage{geometry}
\usepackage{bm}
\usepackage{color}
\usepackage[normalem]{ulem}
\usepackage{graphicx}
 \usepackage{booktabs} 
 \usepackage{natbib}
\newtheorem{proposition}{Proposition}
\newtheorem{theorem}{Theorem}

\newtheorem{remark}{Remark}

\newcommand{\argmin}{\operatornamewithlimits{argmin}}

\newcommand{\bA}{\boldsymbol{A}}
\newcommand{\bc}{\boldsymbol{c}}
\newcommand{\bD}{\boldsymbol{D}}
\newcommand{\bK}{\boldsymbol{K}}
\newcommand{\bL}{\boldsymbol{L}}
\newcommand{\bP}{\boldsymbol{P}}
\newcommand{\bQ}{\boldsymbol{Q}}
\newcommand{\LOCR}{\bL_{\text{\tiny{OCR}}}}

\newcommand{\m}{\bold{m}}
\newcommand{\mOCR}{\m_{\text{\tiny{OCR}}}}

\newcommand{\Cov}{\mathrm{Cov}}
\newcommand{\Var}{\mathrm{Var}}

\newcommand{\E}{\mathbb{E}}

\newcommand{\thetaols}{\theta^{\text{\tiny{OLS}}}}
\newcommand{\thetaocr}{\theta^{\text{\tiny{OCR}}}}

\newcommand{\thetahatols}{\widehat{\theta}^{\text{\tiny{OLS}}}}
\newcommand{\thetahatocr}{\widehat{\theta}^{\text{\tiny{OCR}}}}

\newcommand{\bnu}{\boldsymbol{\nu}}

\newcommand{\hocr}{\boldsymbol{h}_0^{\text{\tiny{OCR}}}}
\newcommand{\hols}{\boldsymbol{h}_0^{\text{\tiny{OLS}}}}
\newcommand{\X}{\boldsymbol{X}}
\newcommand{\y}{\boldsymbol{y}}
\newcommand{\w}{\boldsymbol{w}}
\newcommand{\I}{\boldsymbol{I}}

\newcommand{\e}{\boldsymbol{e}}
\newcommand{\bfbeta}{\boldsymbol{\beta}}

\newcommand{\betapop}{\bfbeta^\ast_{\text{\tiny{OLS}}}}

\newcommand{\sigsqOCR}{\widehat{\sigma}^2_{\text{\tiny{OCR}}}}
\newcommand{\sigOCR}{\widehat{\sigma}_{\text{\tiny{OCR}}}}

\newcommand{\bOLS}{\widehat{\bfbeta}^{\text{\tiny{OLS}}}}
\newcommand{\bOCR}{\widehat{\bfbeta}^{\text{\tiny{OCR}}}}

\newcommand{\etaols}{\eta_{\text{\tiny{OLS}}}}

\newcommand{\etahatols}{\widehat{\eta}_{\text{\tiny{OLS}}}}
\newcommand{\etahatocr}{\widehat{\eta}_{\text{\tiny{OCR}}}}

\newcommand{\One}{\bold{1}}

\newcommand{\bmu}{\boldsymbol{\mu}}
\newcommand{\muX}{\bmu_{\scriptscriptstyle X}}
\newcommand{\muY}{\mu_{\scriptscriptstyle Y}}

\newcommand{\SigXX}{\Sigma_{\scriptscriptstyle XX}}
\newcommand{\SigXY}{\Sigma_{\scriptscriptstyle XY}}

\newcommand{\SigY}{\sigma_{\scriptscriptstyle Y}}
\newcommand{\SigW}{\sigma_{\scriptscriptstyle W}}

\newcommand{\Yhatols}{\widehat{Y}^{\text{\tiny{OLS}}}}
\newcommand{\Yhatocr}{\widehat{Y}^{\text{\tiny{OCR}}}}

\newcommand{\yhatols}{\widehat{\y}^{\text{\tiny{OLS}}}}
\newcommand{\yhatocr}{\widehat{\y}^{\text{\tiny{OCR}}}}

\title{Outcome-Calibrated Regression and Predicted Outcome-Based Inference}
\author{Hwiyoung Lee$^{1,2}$\thanks{email: hwiyoung.lee@som.umaryland.edu}
\quad
Shuo Chen$^{1,2}$\thanks{email: shuochen@som.umaryland.edu}\\[0.5em]
$^{1}$ Department of Epidemiology and Public Health, \\University of Maryland School of Medicine\\
$^{2}$ The University of Maryland Institute for Health Computing (UM-IHC)
}
\date{}

\begin{document}
\maketitle

\begin{abstract}
   Regression is a fundamental tool in scientific research. Ordinary least squares (OLS), one of the most widely used regression methods, enjoys several desirable properties, including the best linear unbiased estimator (BLUE) property. It is well known that, under the assumptions of the standard model, the OLS is conditionally unbiased given the covariates, i.e., \(\mathbb{E}(\widehat Y-Y\mid X=x)
   =0\). However, an often-overlooked property of OLS is that the prediction error is generally not unbiased conditional on the outcome, i.e., \(\mathbb{E}(\widehat Y-Y\mid Y=y)
   \neq 0\). As a consequence of minimizing mean squared error, OLS predictions are systematically shrunk toward the outcome mean,
   which explains the classical phenomenon of regression to the mean (RTM): large outcome values tend to be underpredicted, whereas small outcome values tend to be overpredicted. This conditional prediction bias creates a nonignorable problem for predicted outcome-based inference, where scientific inference is performed using the predicted outcome \(\widehat Y\) and another variable \(W\). In applications such as brain-age analysis and causal inference, we show that inference based on regression-predicted outcomes can be systematically biased. To address this issue, we propose outcome-calibrated regression (OCR), a new regression framework with a closed-form solution that directly enforces outcome calibration. The proposed OCR estimator eliminates conditional prediction bias with respect to the outcome and enables valid inference using regression-predicted outcomes.
\end{abstract}
\section{Introduction}

Regression analysis is the cornerstone of statistical methodology, permeating virtually every domain of empirical science \citep{hastie2009elements,  montgomery2021introduction}. From biomedical research and epidemiology to economics, social science, and engineering, investigators routinely fit regression models to quantify associations, generate predictions, and draw causal conclusions from data. Among regression methods, ordinary least squares (OLS) holds a uniquely prominent position: it is computationally simple, broadly interpretable, and enjoys strong theoretical guarantees. Under the classical Gauss--Markov conditions, OLS produces the best linear unbiased estimator (BLUE) of the regression coefficients, achieving minimum variance among all linear unbiased estimators, a result that has made OLS the default choice across generations of statistical practice. 

In this paper, we identify and study a fundamental yet largely overlooked property of OLS predictions. Under standard linear model assumptions \citep{seber2003linear}, OLS predictions are well known to be conditionally unbiased given the covariates, in the sense that $\E(\widehat{Y} - Y \mid X = x) = 0$. In contrast, much less attention has been paid to prediction bias conditional on the outcome itself. We show that OLS predictions can be systematically biased among observations with the same true outcome value; that is, $\E(\widehat{Y} - Y \mid Y = y) \neq 0$. This conditional prediction bias arises naturally from mean squared error minimization and causes OLS predictions to shrink toward the outcome mean. It manifests as the classical regression-to-the-mean phenomenon, in which large outcomes are systematically underpredicted and small outcomes are systematically overpredicted. While this shrinkage is well recognized as a descriptive property of OLS, its consequences for downstream statistical inference based on regression-predicted outcomes have not been fully characterized. We show that when predicted outcomes $\widehat{Y}$ are used as inputs to further inference, a practice that arises naturally in brain-age gap analysis, polygenic risk score association studies, and causal effect estimation with predicted exposures or outcomes, conditional prediction bias propagates into the inferential procedure, producing systematically attenuated associations and distorted effect estimates.

\begin{figure}[htbp]
    \centering
    \includegraphics[width=0.7\textwidth]{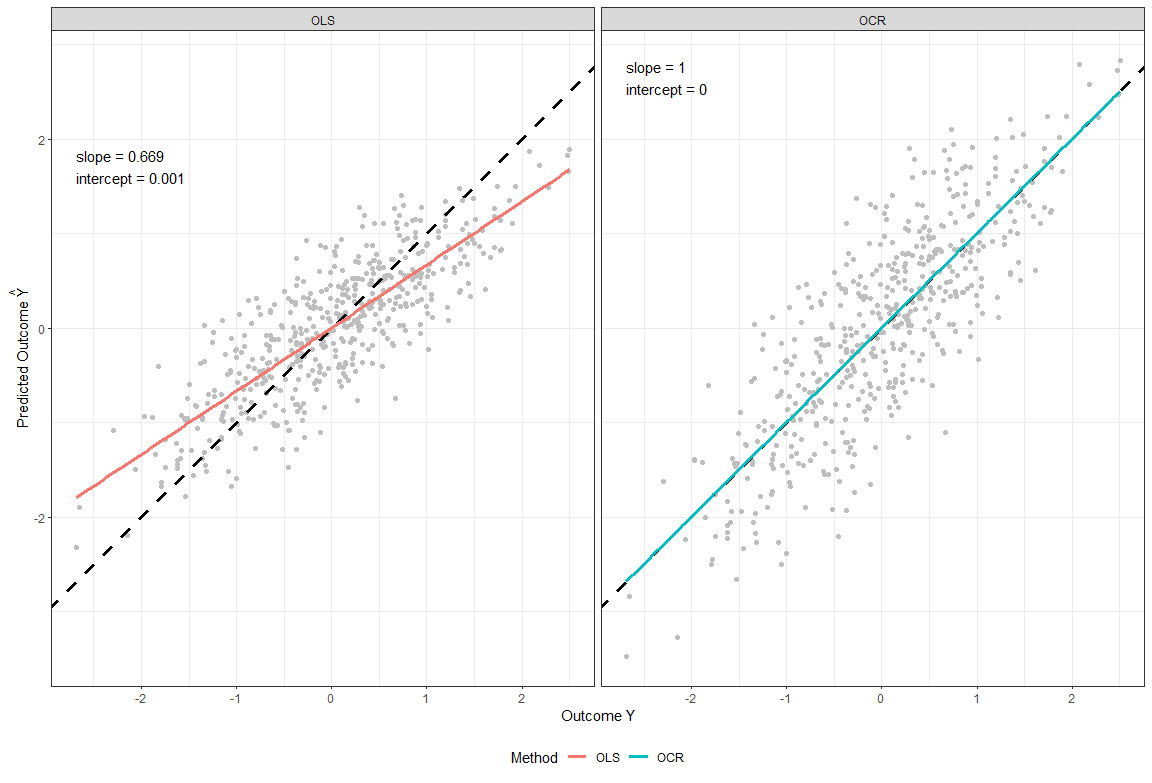}
    \caption{Overview:  scatter plots showing the relationship between  $\widehat{Y}$ vs.$Y$. The dashed line indicates unbiased outcome prediction $\widehat{Y}=Y$. The OLS predicted outcomes (left) are systematically biased. In contrast, the predicted outcomes (right) are unbiased outcome-calibrated regression. }
    \label{fig:ols_ocr}
\end{figure}

\subsection{Outcome Prediction Bias in Regression}

We begin by characterizing the conditional prediction bias of the population OLS predictor. Let $(X,Y)$ denote a pair of random variables, where $X\in\mathbb{R}^p$ is a $p$ dimensional predictor vector with $\E(X)=\muX$ and $\Var(X)=\SigXX$, and $Y\in\mathbb{R}$ is a scalar outcome with $\E(Y)=\muY$ and $\Var(Y)=\SigY^2$. Let $\Cov(X,Y)=\SigXY$, then the population OLS coefficient is 
\begin{align*}
    \betapop = \SigXX^{-1}\SigXY,
\end{align*}
and the corresponding OLS prediction can be written as
\begin{align*}
    \Yhatols = \muY + (X-\muX)^\top\betapop.
\end{align*}
We are interested in the conditional prediction bias given the true outcome value,
$\E(\widehat{Y}-Y \mid Y=y)$, which measures the average prediction error among observations with true outcome value $y$. 
\begin{equation}
\begin{aligned}
    \E(\Yhatols \mid Y=y) &= \E(\Yhatols \mid Y=y) \\
    &= \muY + \E(X-\muX\mid Y=y)^\top \betapop \\
    &= \muY + \underbrace{\frac{\SigXY^\top\SigXX^{-1}\SigXY}{\SigY^2}}_{\text{Define}\  \etaols} (y-\muY) =  \muY + \etaols(y-\muY)
\end{aligned}        
\label{eq:ols_shrinkage}
\end{equation}
Equivalently, since $\etaols$ is the population slope from regressing the OLS prediction $\Yhatols$ on the true outcome $Y$, which equals the population $R^2$. This shows that whenever $\etaols<1$, the OLS prediction is systematically shrunk toward $\muY$. The conditional prediction bias is
\begin{align*}
    \E(\Yhatols-Y \mid Y=y)  = (\etaols-1)(y-\muY).
\end{align*}    
Specifically, when $\etaols<1$, the bias is negative for outcomes above the mean and positive for outcomes below the mean:
\begin{align*}
    \begin{cases}
        \E(\Yhatols-Y\mid Y=y)<0, \quad \text{for} \ y>\muY\\
        \E(\Yhatols-Y\mid Y=y)>0, \quad \text{for} \ y<\muY.
    \end{cases}
\end{align*}
Therefore, unless $\etaols=1$, OLS systematically underpredicts the high outcomes, and overpredicts the low outcomes, exhibiting regression-to-the-mean bias.

\subsection{Biased Inference Based on Regression Predicted Outcomes }
Let   $W \in \mathbb{R}$ be a
variable associated with outcome $Y$. The target quantity is the true regression coefficient of $Y$ on $W$:
\begin{equation}
   \theta
    = \frac{\Cov(Y,\, W)}{\Var(W)}.
    \label{eq:gamma_true}
\end{equation}

However, in many applications, such as biological age estimation in aging clocks and potential outcome estimation in causal inference, $Y$ is not directly observed. Instead, inference is based on a regression-predicted outcome. The following proposition shows how OLS shrinkage propagates to downstream regression inference based on $\Yhatols$.
\begin{proposition}[Downstream bias in regressing $\Yhatols$ on $W$]
Under the OLS prediction shrinkage \eqref{eq:ols_shrinkage}, we assume that the $W$ is associated with the component of $\Yhatols$ only through its association with $Y$, that is $W$ is independent of the component of $\Yhatols$ that is not explanined by $Y$, $\Cov(\Yhatols - \E(\Yhatols\mid Y), W) = 0$. Then the regression coefficient of $\Yhatols$ on $W$ is
\begin{equation}
    \thetaols
    = \etaols\,\theta.
    \label{eq:gamma_ols_general}
\end{equation}
\end{proposition}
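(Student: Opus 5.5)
The plan is to decompose $\Yhatols$ into the part explained by $Y$ and a residual, and then take covariances with $W$ term by term. By \eqref{eq:ols_shrinkage} we have $\E(\Yhatols\mid Y)=\muY+\etaols(Y-\muY)$, so we can write
\begin{align*}
    \Yhatols = \muY + \etaols(Y-\muY) + R, \qquad R := \Yhatols - \E(\Yhatols\mid Y).
\end{align*}
The coefficient of interest is the population slope $\thetaols = \Cov(\Yhatols, W)/\Var(W)$, defined in the same way as $\theta$ in \eqref{eq:gamma_true}. We assume $0<\Var(W)<\infty$ and that all second moments are finite.

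Taking covariances with $W$ and using bilinearity, the constant $\muY$ contributes nothing. This leaves
\begin{align*}
    \Cov(\Yhatols,W)=\etaols\,\Cov(Y,W)+\Cov(R,W).
\end{align*}
The assumption of the proposition is exactly $\Cov(R,W)=0$, so $\Cov(\Yhatols,W)=\etaols\Cov(Y,W)$. Dividing by $\Var(W)$ gives $\thetaols=\etaols\theta$.

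The one point that needs care is the first step: it uses that the conditional mean $\E(\Yhatols\mid Y=y)$ is exactly linear in $y$ with slope $\etaols$. The paper's derivation \eqref{eq:ols_shrinkage} asserts this via $\E(X-\muX\mid Y=y)=\SigXY(y-\muY)/\SigY^2$. That identity holds when $\E(X\mid Y)$ is linear in $Y$, for instance under joint Gaussianity.

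Since the proposition is stated "under the OLS prediction shrinkage \eqref{eq:ols_shrinkage}", I will take that linearity as given and note it explicitly. If one wanted to avoid it, an alternative route is to define $R$ as the residual of the linear projection of $\Yhatols$ on $Y$. This projection has slope $\Cov(\Yhatols,Y)/\SigY^2=\SigXY^\top\SigXX^{-1}\SigXY/\SigY^2=\etaols$ with no distributional assumption. Imposing $\Cov(R,W)=0$ on this projection residual then yields the same conclusion by the identical covariance computation.
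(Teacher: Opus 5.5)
Your proof is correct and is essentially the paper's argument. The paper writes $\thetaols=\theta+\Cov(\Yhatols-Y,W)/\Var(W)$ and applies the law of total covariance given $Y$, which amounts to the same split of $\Yhatols$ into its linear conditional mean $\muY+\etaols(Y-\muY)$ plus a residual that you perform directly. Your bilinearity version is a bit more transparent: the term the paper drops without comment, $\E[\Cov(\Yhatols-Y,W\mid Y)]$, equals $\Cov(R,W)$, so it vanishes exactly by the stated hypothesis. Your closing remark is also correct: the linear-projection residual gives the same identity without assuming $\E(X\mid Y)$ is linear, with the hypothesis restated for that residual.
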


\begin{proof}
Under the OLS prediction setting, the population regression coefficient of
$\Yhatols$ on $W$ is:
\begin{equation}
    \thetaols
    = \frac{\Cov(\Yhatols,\, W)}{\Var(W)} = \frac{\Cov(Y,\, W)}{\Var(W)} + \frac{\Cov(\Yhatols - Y,\, W)}{\Var(W)}
    \label{eq:gamma_ols_def}
\end{equation}
Applying the law of total covariance to the second term:
\begin{align}
    \Cov(\Yhatols - Y,\, W)
    &= \E\bigl[\Cov(\Yhatols - Y,\, W \mid Y)\bigr]
     + \Cov\bigl(\E[\Yhatols - Y \mid Y],\, \E[W \mid Y]\bigr) \notag \\
    &=  \Cov\bigl((\etaols - 1)(Y - \mu_Y),\, \E[W \mid Y]\bigr) \notag \\
    &= (\etaols - 1)\,\Cov(Y,\, W),
    \label{eq:cov_resid}
\end{align}
Substituting \eqref{eq:cov_resid} into \eqref{eq:gamma_ols_def} yields \eqref{eq:gamma_ols_general}.
\end{proof}

Thus, the bias $\theta-\thetaols=(1-\etaols)\theta$ is induced by the shrinkage of $\Yhatols$ towards $\mu_Y$. In practice, this biased association estimate can lead to misguided conclusions about treatment effectiveness and risk factors associated with accelerated molecular aging.

\section{Methods: Outcome-calibrated regression (OCR)}\label{sec:sc_lr}

We consider a regression with training data sample $\mathcal{D}=(\X,\y)$, where $\X= (\X_1,\cdots,\X_n)^\top \in\mathbb{R}^{n\times p}$ denotes the matrix of $p$ predictors and $\y = (y_1,\cdots,y_n)^\top \in \mathbb{R}^n$ denotes the outcome of interest. For a linear model, let $\widehat{\y} = \X\widehat{\bfbeta} =  (\widehat{y}_1,\cdots,\widehat{y}_n)^\top \in \mathbb{R}^n$ denote the vector of predicted outcomes.
We define the empirical calibration regression of fitted values on observed outcomes as
\begin{align*}
    \widehat{y}_i=\widehat{\alpha} + \widehat{\eta}y_i + \varepsilon_i,
\end{align*}
where calibration slope and intercept are
\begin{align*}
    \widehat{\eta} &= \frac{\Cov(\y,\widehat{\y})}{\Var(\y)}=\frac{\sum_{i=1}^n (y_i-\overline{y})(\widehat{y}_i-\overline{\widehat{y}})}{\sum_{i=1}^n (y_i-\overline{y})^2},\\
    \widehat{\alpha} &= \overline{\widehat{y}}-  \widehat{\eta} \overline{y}.
\end{align*}
Ideally, a well calibrated prediction should satisfy $\widehat{\eta}=1$ and $\widehat{\alpha}=0$.
Specifically, $\widehat{\eta}=1$ means that the covariance between the observed outcomes and predicted values equals the variance of the observed outcomes, i.e., $\Cov(\y,\widehat{\y}) = \Var(\y)$, equivalently $\Cov(\y,\widehat{\y}-\y) = 0$. Thus the prediction error is uncorrelated with the observed outcome, which implies the error does not systematically increase or decrease linearly with $\y$. Under a linear model, this can be written explicitly as
\begin{align*}
    \widehat{\eta}=1 \Rightarrow \quad 
    &(\y-\overline{y}\One)^\top (\widehat{\y}-\overline{\widehat{y}}\One) = (\y-\overline{y}\One)^\top  (\y-\overline{y}\One)\\
    &(\y-\overline{y}\One)^\top \X\widehat{\bfbeta} = 
    (\y-\overline{y}\One)^\top\y \quad\quad \because (\y-\overline{y}\One)^\top \One = 0
\end{align*}
After imposing the unit slope condition, the zero intercept condition $\widehat{\alpha}=0$ is equivalent to matching the mean of the fitted values to the mean of the observed outcomes:
\begin{align*}
    \frac{1}{n}\One^\top\X\widehat{\bfbeta} = \overline{y}.
\end{align*}
Thus, the two calibration conditions can be written as linear equality
constraints in $\bfbeta$:
\begin{align*}
\begin{cases}
\widehat{\eta}(\bfbeta)=1\\
\widehat{\alpha}(\bfbeta)=0    
\end{cases}
\Longrightarrow
    \underbrace{\begin{pmatrix}
        (\y-\overline{y}\One)^\top \X\\
         \frac{1}{n}\One^\top\X
    \end{pmatrix}}_{\bA}
    \bfbeta=
      \underbrace{\begin{pmatrix}
        (\y-\overline{y}\One)^\top\y \\
        \overline{y}
    \end{pmatrix}
    }_{\bc}
\end{align*}
Therefore, we impose these calibration constraints directly  during model fitting. The
outcome calibrated linear regression (OCR) estimator is defined as
\begin{align}
\bOCR = \argmin_{\bfbeta} \frac{1}{2n}\|\y-\X\bfbeta\|_2^2 \quad \mathrm{s.t.} \ \bA\bfbeta = \bc.
\label{eq:SC_LM}
\end{align}
The Lagrangian is
\begin{align*}
    \mathcal{L}(\bfbeta,\bnu) = \frac{1}{2n}\|\y-\X\bfbeta\|_2^2  + \bnu^\top (\bA\bfbeta -\bc).
\end{align*}
Take derivative with respect to $\bfbeta$, and set it to $0$,
we obtain
\begin{align*}
    \frac{\partial\mathcal{L}(\bfbeta,\bnu)}{\partial\bfbeta} = \frac{1}{n}\X^\top\X\bfbeta-\frac{1}{n}\X^\top\y + \bA^\top\bnu = 0
\end{align*}
Therefore, the KKT system is
\begin{align*}
    \begin{bmatrix}
        \frac{1}{n}\X^\top\X & \bA^\top\\
        \bA & 0
    \end{bmatrix}
        \begin{bmatrix}
           \bfbeta\\
            \bnu
    \end{bmatrix}=
    \begin{bmatrix}
        \frac{1}{n}\X^\top\y\\
        \bc
    \end{bmatrix}
\end{align*}
From the first row 
\begin{align*}
    \bfbeta = (\X^\top\X)^{-1} (\X^\top\y-\bA^\top\bnu)
\end{align*}
Plugging this into the second row, i.e., $\bA\bfbeta=\bc$, and solving for $\bnu$ gives
\begin{align*}
    \bnu = (\bA  (\X^\top\X)^{-1} \bA^\top)^{-1} (\bA (\X^\top\X)^{-1}\X^\top\y -\bc).
\end{align*}
Substituting this expression back into $\bfbeta = (\X^\top\X)^{-1} (\X^\top\y-\bA^\top\bnu)$, then we obtain the closed form estimator:
\begin{align*}
\bOCR &= (\X^\top\X)^{-1} (\X^\top\y-\bA^\top (\bA  (\X^\top\X)^{-1} \bA^\top)^{-1} (\bA (\X^\top\X)^{-1}\X^\top\y -\bc))\\
&= \bOLS- \underbrace{(\X^\top\X)^{-1} \bA^\top  (\bA  (\X^\top\X)^{-1} \bA^\top)^{-1}}_{\bK \in \mathbb{R}^{p\times 2}} (\bA\bOLS-\bc)\\
&= \bOLS - \bK(\bA\bOLS-\bc).
\end{align*}
Here, $\bK$ is a calibration correction matrix, that maps the constraint violation of the OLS estimator, $\bA\bOLS-\bc$, into a coefficient adjustment that enforces the calibration constraints. Specifically, $\bK$ acts as a right inverse of $\bA$, i.e., $\bA\bK = \I$. Therefore, the adjustment term exactly removes the constraint violation:
\begin{align*}
    \bA\bOCR &= \bA\bOLS - \bA\bK(\bA\bOLS-\bc)=\bc.
\end{align*}

Let $\bP = \bK\bA$ and $\bQ=\I-\bP$, then we can express OCR estimator as
\begin{align*}
\bOCR = (\I-\bP)\bOLS + \bK\bc = \bQ\bOLS + \bK\bc.
\end{align*}

It also should be noted that minimizing the sum of squared errors over $\mathcal{F}$ is equivalent to choosing the feasible coefficient vector whose fitted prediction values are closest to the OLS predicted values:
\begin{align*}
    \bOCR = \argmin_{\bfbeta\in\mathcal{F}}\| \X\bfbeta-\X\bOLS\|_2^2
\end{align*}
\begin{align*}
    \bOCR &= \argmin_{\bfbeta\in\mathcal{F}}\| \y-\X\bfbeta\|_2^2 \\
    &= \argmin_{\bfbeta\in\mathcal{F}} \| \y-\X\bOLS + \X\bOLS - \X\bfbeta\|_2^2\\
    &= \argmin_{\bfbeta\in\mathcal{F}}\| \y-\X\bOLS\|_2^2 + \|\X\bOLS - \X\bfbeta\|_2^2 + 2(\y-\X\bOLS)^\top\X(\bOLS-\bfbeta)\\
    &= \argmin_{\bfbeta\in\mathcal{F}} \| \y-\X\bOLS\|_2^2 + \|\X\bOLS - \X\bfbeta\|_2^2 \\
    &=\argmin_{\bfbeta\in\mathcal{F}}\| \X\bfbeta-\X\bOLS\|_2^2.
\end{align*}

\begin{remark}
   The outcome calibrated regression (OCR) estimator  $\bOCR$ is the orthogonal projection of $\bOLS$ onto the calibration constraints set $\mathcal{F}$ under the $\X^\top\X$-weighted inner product, i.e., $\langle u,v\rangle_{\X^\top\X} = u^\top\X^\top\X v$
\end{remark}

\subsection{Properties of OCR}
We consider a regression task with training data $\mathcal{D} = (\X, \bm{y})$, where
$\X \in \mathbb{R}^{n \times p}$ is the design matrix and $\y \in \mathbb{R}^n$ is the
outcome vector. The linear model is $\bm{y} = X\bm{\beta} + \bm{\varepsilon}$, with
$\mathbb{E}(\bm{\varepsilon}) = \bm{0}$ and $\Var(\bm{\varepsilon}) = \sigma^2 \I$.

\paragraph{Minimum Variance within $\mathcal{F}$.}

\begin{theorem}[Constrained Gauss-Markov]
    Among all linear unbiased estimators $\widetilde{\bfbeta} = \bL\y+\m$ for $\bfbeta \in \mathcal{F}$, then $\bOCR$ has minimum variance, in the sense  that
    \begin{align*}
       \Var(\widetilde{\bfbeta}) - \Var(\bOCR)
         \succeq  0.
    \end{align*}
\end{theorem}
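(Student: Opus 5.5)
The plan is the classical Gauss--Markov decomposition argument, adapted to the affine constraint set $\mathcal{F}=\{\bfbeta:\bA\bfbeta=\bc\}$. Throughout I treat $\bA$ and $\bc$ as fixed (non-random) and the true $\bfbeta$ as lying in $\mathcal{F}$. This is the main conceptual obstacle: in the paper $\bA$ and $\bc$ are built from $\y$, so $\bOCR$ is not literally linear in $\y$. I would state explicitly that the theorem is read conditionally on the constraint, i.e.\ $\bA,\bc$ are regarded as known constants, exactly as in the standard restricted least squares theory. Under this reading, $\bOCR=\LOCR\y+\mOCR$ with
\[
\LOCR=\bQ(\X^\top\X)^{-1}\X^\top, \qquad \mOCR=\bK\bc .
\]

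\textbf{Step 1 (unbiasedness over $\mathcal{F}$).} Let $\bfbeta_0$ be any feasible point, and let the columns of $\boldsymbol{N}$ span $\ker\bA$, so that $\mathcal{F}=\{\bfbeta_0+\boldsymbol{N}\boldsymbol{\gamma}\}$. Then $\E(\bL\y+\m)=\bL\X\bfbeta+\m$ equals $\bfbeta$ for every $\bfbeta\in\mathcal{F}$ if and only if
\[
(\bL\X-\I)\boldsymbol{N}=0 \quad\text{and}\quad (\bL\X-\I)\bfbeta_0+\m=0 .
\]
For OCR these conditions hold:
\begin{itemize}
\item $\LOCR\X=\bQ$, and $\bQ\boldsymbol{N}=\boldsymbol{N}-\bK\bA\boldsymbol{N}=\boldsymbol{N}$, which gives the first condition.
\item $(\bQ-\I)\bfbeta_0+\bK\bc=-\bK\bA\bfbeta_0+\bK\bc=0$, which gives the second.
\end{itemize}
Hence $\bOCR$ is in the class being compared.

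\textbf{Step 2 (decomposition).} Write $\bL=\LOCR+\bD$ for an arbitrary competitor. Subtracting the unbiasedness conditions for $\bL$ and for $\LOCR$ gives $\bD\X\boldsymbol{N}=0$; the offset $\m$ plays no role in the variance. Since $\Var(\bL\y+\m)=\sigma^2\bL\bL^\top$, we have
\[
\Var(\widetilde{\bfbeta})=\Var(\bOCR)+\sigma^2\bD\bD^\top+\sigma^2\bigl(\LOCR\bD^\top+\bD\LOCR^\top\bigr).
\]
It therefore suffices to show that the cross term vanishes, because $\bD\bD^\top\succeq 0$.

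\textbf{Step 3 (cross term).} Note that $\bQ(\X^\top\X)^{-1}=(\X^\top\X)^{-1}-(\X^\top\X)^{-1}\bA^\top(\bA(\X^\top\X)^{-1}\bA^\top)^{-1}\bA(\X^\top\X)^{-1}$ is symmetric. Therefore
\[
\bD\LOCR^\top=\bD\X(\X^\top\X)^{-1}\bQ^\top=\bD\X\,\bQ(\X^\top\X)^{-1}.
\]
Also $\bA\bQ=\bA-\bA\bK\bA=0$, using $\bA\bK=\I$. So the columns of $\bQ(\X^\top\X)^{-1}$ lie in $\ker\bA=\operatorname{range}\boldsymbol{N}$. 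Combined with $\bD\X\boldsymbol{N}=0$, this gives $\bD\LOCR^\top=0$, and its transpose vanishes too. Hence $\Var(\widetilde{\bfbeta})-\Var(\bOCR)=\sigma^2\bD\bD^\top\succeq 0$.

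Equality holds iff $\bD=0$, which also yields uniqueness. Apart from the fixed-constraint caveat, the only delicate algebra is the symmetry of $\bQ(\X^\top\X)^{-1}$ together with $\bA\bQ=0$, both of which follow directly from $\bA\bK=\I$.
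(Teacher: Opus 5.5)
Your proof is correct and takes the same route as the paper. You write the competitor as $\LOCR+\bD$, use unbiasedness over $\mathcal{F}$ to show that $\bD\X$ annihilates $\ker\bA$, and kill the cross term because $\bQ(\X^\top\X)^{-1}$ has columns in $\ker\bA$. Beyond the paper, you make explicit three things it leaves implicit: the symmetry of $\bQ(\X^\top\X)^{-1}$ needed to transpose $\LOCR\bD^\top$, the derivation of $\bD\X\boldsymbol{N}=0$ by differencing feasible points, and the fact that the theorem only makes sense with $\bA,\bc$ treated as fixed.
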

\begin{proof}
    We express $\bOCR = (\I - \bK\bA)\bOLS + \bK\bc = \underbrace{(\I-\bK\bA)(\X^\top\X)^{-1}\X^\top}_{\LOCR}\y + \bK\bc$. So
    the difference is 
    \[\widetilde{\bfbeta}-\bOCR = (\bL-\LOCR)\y + (\m-\mOCR).\]
    We define $\bD = \bL-\LOCR$ and $\e = \m-\mOCR$
    Since both $\widetilde{\bfbeta}$ and $\bOCR$ are unbiased for $\bfbeta\in\mathcal{F}$,
    \[\E(\widetilde{\bfbeta}-\bOCR\mid \X) = \bD\E(\y\mid\X) + \e = \bD\X\bfbeta + \e=0\]
    We have $\widetilde{\bfbeta}=\bOCR+(\widetilde{\bfbeta}-\bOCR)$.   
    \begin{align*}
        \Var(\widetilde{\bfbeta}) &= \Var(\bOCR) + \Var(\widetilde{\bfbeta}-\bOCR) + 2\Cov(\bOCR,\widetilde{\bfbeta}-\bOCR)
    \end{align*}
We can express the cross term as $\Cov(\bOCR,\widetilde{\bfbeta}-\bOCR) = \sigma^2\LOCR\bD^\top$.
Since $\bA\bK=\I$, we have $(\I-\bK\bA)(\X^\top\X)^{-1} \in \text{Null}(\bA)$, which implies $\bD\X\{(\I-\bK\bA)(\X^\top\X)^{-1}\}=0$. Thus, $\LOCR\bD^\top = (\I-\bK\bA)(\X^\top\X)^{-1}\X^\top\bD^\top = (\bD\X\{(\I-\bK\bA)(\X^\top\X)^{-1}\})^\top = 0$. Then we have
    \begin{align*}
       \Var(\widetilde{\bfbeta}) - \Var(\bOCR)&= \Var(\widetilde{\bfbeta}-\bOCR) =\sigma^2\bD\bD^\top  \succeq 0\\
    \end{align*}
\end{proof}

\paragraph{Unbiased association analysis for $\Yhatocr$ and $W$. }

\begin{proposition}[Downstream unbiasedness in regressing $\Yhatocr$ on $W$]
\label{prop:sc_unbiased}
Under the calibration constraint and the conditional
independence assumption $W \perp (\Yhatocr - Y) \mid Y$, which often holds as the residual $\Yhatocr - Y$ being random noise. The population regression coefficient
of $\Yhatocr$ on $W$ satisfies $\thetaocr=\theta$.
This shows that the OCR coefficient recovers the true regression coefficient of $Y$ on $W$.
\end{proposition}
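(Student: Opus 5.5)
The plan mirrors the proof of the downstream-bias proposition for $\Yhatols$, replacing the shrinkage slope $\etaols$ by the calibrated slope. First I write
\begin{align*}
    \thetaocr = \frac{\Cov(\Yhatocr, W)}{\Var(W)} = \theta + \frac{\Cov(\Yhatocr - Y, W)}{\Var(W)},
\end{align*}
so it suffices to show $\Cov(\Yhatocr - Y, W) = 0$. I then apply the law of total covariance conditioning on $Y$:
\begin{align*}
    \Cov(\Yhatocr - Y, W) = \E\bigl[\Cov(\Yhatocr - Y, W \mid Y)\bigr] + \Cov\bigl(\E[\Yhatocr - Y\mid Y], \E[W\mid Y]\bigr).
\end{align*}
The first term vanishes by the assumed conditional independence $W \perp (\Yhatocr - Y)\mid Y$, since conditional independence implies zero conditional covariance.

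For the second term I use the calibration constraint. At the population level, $\bA\bfbeta=\bc$ says that the calibration slope of $\Yhatocr$ on $Y$ equals one and the intercept equals zero. By analogy with \eqref{eq:ols_shrinkage}, where $\E(\Yhatols\mid Y=y)=\muY+\etaols(y-\muY)$, the calibrated predictor then satisfies
\[
\E(\Yhatocr\mid Y=y)=\muY+1\cdot(y-\muY)=y,
\]
so $\E[\Yhatocr - Y\mid Y]=0$ and the covariance term is zero. Combining the two terms gives $\thetaocr=\theta$.

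The main obstacle is the passage from the calibration constraint to $\E[\Yhatocr-Y\mid Y]=0$. The constraint $\Cov(Y,\Yhatocr-Y)=0$ with matched means only fixes the linear projection of $\Yhatocr - Y$ on $Y$ at zero. It does not by itself make the conditional mean zero. Following the paper's OLS derivation, which treats $\E(X-\muX\mid Y=y)$ as linear in $y$, I would adopt the same linear conditional-mean structure, for instance joint Gaussianity of $(X,Y)$. Under that structure the conditional mean equals the projection and the argument above goes through.

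If I want to avoid this assumption, I would argue more directly. Conditional independence gives
\[
\Cov(\Yhatocr - Y, W)=\Cov\bigl(g(Y),\E[W\mid Y]\bigr), \qquad g(Y)=\E[\Yhatocr-Y\mid Y].
\]
Calibration only guarantees $\Cov(g(Y),Y)=0$. This leaves two routes: assume linear conditional means, or assume $\E[W\mid Y]$ is linear in $Y$. Under the latter, $\Cov\bigl(g(Y),\E[W\mid Y]\bigr)$ reduces to a multiple of $\Cov(g(Y),Y)=0$. I would state whichever of these linearity conditions is used explicitly in the proof.
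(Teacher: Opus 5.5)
Your proposal is correct and follows the route the paper intends. The paper states Proposition~\ref{prop:sc_unbiased} without proof, and the implicit argument is the OLS downstream-bias proof (write $\thetaocr=\theta+\Cov(\Yhatocr-Y,W)/\Var(W)$, apply the law of total covariance, kill the first term by conditional independence) with $\etaols$ replaced by $1$; that argument silently uses the linear conditional mean $\E(X-\muX\mid Y=y)$ from \eqref{eq:ols_shrinkage}. Your caveat is a genuine correction rather than excess caution, because the constraints $\bA\bfbeta=\bc$ only set the linear projection of $\Yhatocr-Y$ on $Y$ to zero. For example, with $Y\sim N(0,1)$, $X=Y^3$, $\Yhatocr=X/3$ and $W=Y^3$, the predictor is calibrated and trivially satisfies $W\perp(\Yhatocr-Y)\mid Y$, yet $\thetaocr=1/3\neq 1/5=\theta$; so one of your two linearity conditions must be stated explicitly.
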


\section{Inference for Outcome Calibrated Regression}
In this section, we focus on  the statistical inference for regression parameters and prediction intervals for outcome calibrated regression 

\subsection{Inference for $\bOCR$}
\begin{proposition} [Variance--Covariance of $\bOCR$]
Under the standard linear model assumptions
$\mathbb{E}[\bm{\varepsilon}] = \bm{0}$, $\mathrm{Var}(\bm{\varepsilon}) = \sigma^2 I$,
treating $\bA$ and $\bc$ as fixed (i.e.\ conditioning on $X$ and the constraint
being determined), the variance-covariance matrix of $wide\hat{\bm{\beta}}^{SC}$ is:
\begin{equation}
    \Var(\bOCR)
    = \sigma^2 \bQ(X^\top X)^{-1}\bQ^\top.
    \label{eq:var_sc}
\end{equation}
\end{proposition}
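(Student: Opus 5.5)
The plan is to use the representation $\bOCR = \bQ\bOLS + \bK\bc$ derived earlier, with $\bQ = \I - \bK\bA$. We treat $\X$, $\bA$ and $\bc$ as fixed, as the statement permits. Then $\bQ$ and $\bK\bc$ are deterministic, and $\bK\bc$ is a constant shift that contributes nothing to the variance. The variance of a linear transformation therefore gives
\[
\Var(\bOCR) = \bQ\,\Var(\bOLS)\,\bQ^\top .
\]

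The second step is the classical OLS fact. We have $\bOLS = (\X^\top\X)^{-1}\X^\top\y$ with $\Var(\y\mid\X) = \sigma^2\I$, so
\[
\Var(\bOLS) = (\X^\top\X)^{-1}\X^\top(\sigma^2\I)\X(\X^\top\X)^{-1} = \sigma^2(\X^\top\X)^{-1}.
\]
Substituting this into the first display yields \eqref{eq:var_sc}. Equivalently, one can write $\bOCR = \LOCR\y + \bK\bc$ as in the proof of the Constrained Gauss--Markov theorem and compute $\sigma^2\LOCR\LOCR^\top$. This gives the same expression, since $\LOCR = \bQ(\X^\top\X)^{-1}\X^\top$.

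As an optional sanity check, we can note that $\bQ(\X^\top\X)^{-1} = (\X^\top\X)^{-1} - \bK\bA(\X^\top\X)^{-1}$ and that $\bA\bK = \I$. Together these give the alternative form
\[
\sigma^2\left[(\X^\top\X)^{-1} - (\X^\top\X)^{-1}\bA^\top(\bA(\X^\top\X)^{-1}\bA^\top)^{-1}\bA(\X^\top\X)^{-1}\right].
\]
This follows because $\bQ(\X^\top\X)^{-1}\bQ^\top = \bQ(\X^\top\X)^{-1}$, which holds since $\bA\bQ(\X^\top\X)^{-1} = 0$ and $\bK$ is built with the $(\X^\top\X)^{-1}$ weighting. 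This form makes clear that $\Var(\bOLS) - \Var(\bOCR) \succeq 0$.

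The main obstacle is conceptual rather than computational. Both $\bA$ and $\bc$ depend on $\y$ through $\y - \overline{y}\One$, so strictly speaking $\bOCR$ is not linear in $\y$. The proof therefore relies on the stated convention that $\bA$ and $\bc$ are held fixed, i.e., that the constraint is treated as determined. I will state explicitly that the formula is exact under this conditioning and should be read as an approximation otherwise.
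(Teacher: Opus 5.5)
Your proposal is correct and follows the same route the paper implicitly uses: with $\bQ$ and $\bK\bc$ treated as constants, $\bOCR = \bQ\bOLS + \bK\bc$ gives $\Var(\bOCR) = \bQ\,\Var(\bOLS)\,\bQ^\top = \sigma^2\bQ(\X^\top\X)^{-1}\bQ^\top$, and the subsequent Remark's identity yields the simplified form. One caution on your closing caveat: genuinely conditioning on $(\X,\bA,\bc)$ would fix $\X^\top\y = \X^\top(\y-\overline{y}\mathbf{1}) + \overline{y}\,\X^\top\mathbf{1}$, and hence $\bOLS$ and $\bOCR$ themselves, so that conditional variance would be zero; the formula is exact only under the convention that $\bA$ and $\bc$ are nonrandom while $\Var(\y\mid\X)=\sigma^2\I$, not under literal conditioning.
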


\begin{remark}
    Note that $\bP$ satisfies $\bP(\X^\top \X)^{-1}\bP^\top = \bP(\X^\top \X)^{-1}$, so \eqref{eq:var_sc} simplifies to:
    \[
        \Var(\bOCR)
        = \sigma^2\bigl[(\X^\top \X)^{-1} - (\X^\top \X)^{-1}\bA^\top
          \bigl(\bA(\X^\top \X)^{-1}\bA^\top\bigr)^{-1}\bA(\X^\top \X)^{-1}\bigr].
    \]
\end{remark}
This shows that $\Var(\bOCR) \preceq \sigma^2(\X^\top \X)^{-1} = \Var(\bOLS)$. Thus, under these constraints, OCR has coefficient estimation variance no larger than that of OLS.

The residual sum of squares under the calibration constraint is $\mathrm{RSS}_{\text{\tiny{OCR}}} = \|\y - \X\bOCR\|^2$.
Based on the restricted least squares with equality constraints $\bA\bfbeta=\bc$, where $r = \mathrm{rank}(\bA) = 2$, the parameter space has dimension $p-r$. Thus, the residual degrees of freedom is $n-p+2$. 

\begin{equation}
    \sigsqOCR = \frac{\mathrm{RSS}_{\text{\tiny{OCR}}}}{n - p+2}.
    \label{eq:sigma_sc}
\end{equation}

Using this estimate, the standard error of the $j$-th coefficient $\bOCR_j$ is
\begin{align*}
    \mathrm{SE}(\bOCR_j) = \sigOCR \sqrt{\bigl[(\X^\top \X)^{-1} - (\X^\top \X)^{-1}\bA^\top
      \bigl(\bA(\X^\top \X)^{-1}\bA^\top\bigr)^{-1}\bA(\X^\top \X)^{-1}\bigr]_{jj}},
    \label{eq:var_sc_est}
\end{align*}
where $[\cdot]_{jj}$ denotes the $j$-th diagonal entry.
A $(1-\alpha)$ confidence interval for $\bfbeta_j$ is then given by
\begin{align*}
    \bOCR_j \pm t_{n-p+2,1-\alpha/2} \mathrm{SE}(\bOCR_j),
\end{align*}
where $t_{n-p+2,\alpha/2}$ is the $1-\alpha/2$ quantile of the $t$-distribution with $n-p$ degrees of freedom. Wald-type tests for individual coefficients can be performed using the corresponding test statistic, $\bOCR_j/\mathrm{SE}(\bOCR_j)$.

\subsection{Confidence and Prediction Intervals}
\paragraph{Confidence interval for the mean response.}

For a covariate vector $\bm{x}_0 \in \mathbb{R}^p$, the OCR predicted mean response is $\yhatocr_0 = \bm{x}_0^\top \hat{\bm{\beta}}^{SC}$.
Using the variance for $\bOCR$, the variance of the fitted response is $\Var(\yhatocr_0)=\sigma^2\hocr$, where
\begin{equation}
\hocr = \bm{x}_0^\top
      \bigl[(\X^\top \X)^{-1} - (\X^\top \X)^{-1}\bA^\top
      (\bA(\X^\top \X)^{-1}\bA^\top)^{-1}\bA(\X^\top \X)^{-1}\bigr]
      \bm{x}_0.
\end{equation}
Replacing $\sigma^2$ by $\sigsqOCR$, the $(1-\alpha)$ confidence interval for the mean response $\E(Y\mid\bm{x}_0)=\bm{x}_0^\top\bm{\beta}$ is 
\[
\yhatocr_0 \pm t_{n-p+2,1-\alpha/2} \sigOCR \sqrt{\hocr}
\] 

This interval quantifies the uncertainty in estimating the population mean response at $\bm{x}_0$. Its width is governed by the OCR leverage $\hocr$, and increases for covariate values with a higher leverage under the OCR design. In OLS, the analogous interval is narrowest at the covariate mean ($\bm{x}_0 = \bar{\bm{x}}$), but this property is not guaranteed under OCR because the calibration constraints modify the leverage structure.\\

\begin{remark}
    Compared to the OLS confidence interval, which uses the leverage
    $\hols = \bm{x}_0^\top(\X^\top \X)^{-1}\bm{x}_0$, the OCR confidence interval uses the constrained leverage $\hocr$. Treating the calibration constraints as fixed, the OCR leverage satisfies \[\hocr \leq \hols\], because the OCR covariance matrix subtracts a positive semidefinite adjustment from the OLS covariance matrix. This reflects the variance reduction induced by restricting estimation to the calibrated parameter space $\mathcal{F}$. Thus, conditional on the calibration constraints, the fitted mean response has no larger variance than its OLS counterpart. 
\end{remark}

\paragraph{Prediction interval for a new observation.}
For a new observation at covariate vector $\bm{x}_0$, assume $y_0^{\mathrm{new}}= \bm{x}_0^\top\bfbeta + \varepsilon_0$ with $\varepsilon_0 \sim N(0, \sigma^2)$ independent of the training data. The prediction error satisfies $y_0^{\mathrm{new}} - \yhatocr_0 = \varepsilon_0 + \bm{x}_0^\top(\bfbeta - \bOCR).$ Treating the OCR calibration constraints as fixed, we have $\Var(y_0^{\mathrm{new}} - \yhatocr_0) = \sigma^2\bigl(1 + \hocr\bigr)$, where the additional 1 accounts for the irreducible noise in a single new observation.
Therefore, the $(1-\alpha)$ prediction interval for a new observation at $\bm{x}_0$ is
\begin{equation}
\yhatocr_0 \pm t_{n-p+2,1-\alpha/2} \sigOCR \sqrt{1+\hocr}.
   \label{eq:pi_new}
\end{equation}

\section{Simulation Study}
We perform simulation to assess the bias of outcome prediction and estimated regression parameter of $\widehat{Y}$ on $W$.

\subsection{Unbiased Calibration}

We first conducted a simulation study to evaluate the unbiased calibration
properties of outcome calibrated regression (OCR) relattive to ordinary least squares (OLS).
For each replication, we generated independent covariates
$\X \in \mathbb{R}^{n \times p}$ with $\X_{ij} \overset{\text{iid}}{\sim}
\mathcal{N}(0,1)$, and outcomes
\begin{equation}
  \y_i \;=\; \X_i^\top \boldsymbol{\beta} + \varepsilon_i,
  \qquad \varepsilon_i \overset{\text{iid}}{\sim} \mathcal{N}(0, \sigma^2),
  \label{eq:dgp}
\end{equation}
where $\boldsymbol{\beta} = (0.5, \ldots, 0.5)^\top \in \mathbb{R}^P$.
We adopted a factorial design crossing sample size $n \in \{200, 500\}$,
number of predictors $p \in \{2, 5\}$, and noise level
$\sigma \in \{0.5, 1.0\}$, yielding eight scenarios. We computed three metrics on the training sample:
\begin{enumerate}
  \item \textbf{Mean squared error (MSE):}
    $\mathrm{MSE} = n^{-1}\sum_{i=1}^n (\y_i - \widehat{\y}_i)^2$.
  \item \textbf{Calibration slope} ($\eta$): the slope from regressing
    $\widehat{\y}$ on $\y$.
  \item \textbf{Calibration intercept} ($\alpha$): the intercept from the
    same regression. 
\end{enumerate}
The results averaged over $500$ replications are provided in Table~\ref{tab:sim_results}.

\begin{figure}[ht]
  \centering
  \includegraphics[width=\textwidth]{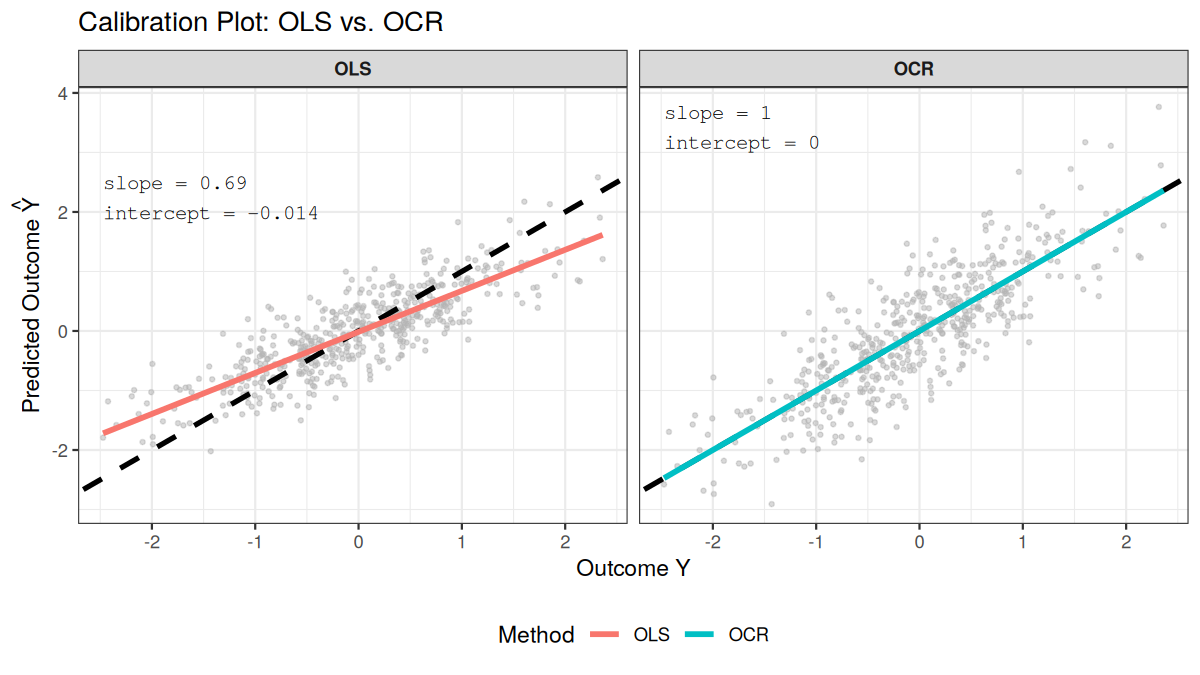}
  \caption{Representative calibration plots ($n = 500$, $p = 2$, $\sigma = 0.5$).
    Each panel plots predicted values $\widehat{\y}$ against observed outcomes $\y$.
    The dashed line is the identity ($\widehat{\y} = \y$); the solid line is the
    fitted calibration regression.
    OLS exhibits systematic slope attenuation ($\eta \approx 0.69$),
    whereas OCR achieves exact calibration ($\eta = 1$, $\alpha = 0$)
    by construction.}
  \label{fig:calib_plot}
\end{figure}

\begin{table}[ht]
\centering
\setlength{\tabcolsep}{4pt}
\caption{Simulation results ($500$ replications per scenario).
Entries are mean\,(SD) over replications.
Calibration slope~$\widehat{\eta}$ and intercept~$\widehat{\alpha}$ are
obtained by regressing $\widehat{\y}$ on $\y$;
perfect calibration requires $\widehat{\eta}=1$, $\widehat{\alpha}=0$.}
\label{tab:sim_results}
\resizebox{\textwidth}{!}{%
\begin{tabular}{ccc cc cc cc}
\toprule
& & &
\multicolumn{2}{c}{MSE} &
\multicolumn{2}{c}{Calibration Slope $\hat{\eta}$} &
\multicolumn{2}{c}{Calibration Intercept $\hat{\alpha}$} \\
\cmidrule(lr){4-5}\cmidrule(lr){6-7}\cmidrule(lr){8-9}
$n$ & $p$ & $\sigma$ & OLS & OCR & OLS & OCR & OLS & OCR \\
\midrule
200 & 2 & 0.5 & 0.247\,(0.025) & 0.374\,(0.052) & 0.666\,(0.038) & 1.000\,(0.000) & -0.001\,(0.020) & $\approx 0$\,(${<}10^{-16}$) \\
500 & 2 & 0.5 & 0.248\,(0.016) & 0.372\,(0.033) & 0.669\,(0.024) & 1.000\,(0.000) & 0.001\,(0.013) & $\approx 0$\,(${<}10^{-16}$) \\
200 & 5 & 0.5 & 0.243\,(0.024) & 0.291\,(0.035) & 0.837\,(0.021) & 1.000\,(0.000) & -0.001\,(0.014) & $\approx 0$\,(${<}10^{-16}$) \\
500 & 5 & 0.5 & 0.246\,(0.016) & 0.295\,(0.022) & 0.835\,(0.014) & 1.000\,(0.000) & 0.001\,(0.009) & $\approx 0$\,(${<}10^{-16}$) \\
\midrule
200 & 2 & 1.0 & 0.978\,(0.101) & 2.956\,(0.668) & 0.341\,(0.052) & 1.000\,(0.000) & -0.002\,(0.054) & $\approx 0$\,(${<}10^{-16}$) \\
500 & 2 & 1.0 & 0.994\,(0.064) & 3.011\,(0.432) & 0.335\,(0.034) & 1.000\,(0.000) & -0.002\,(0.034) & $\approx 0$\,(${<}10^{-16}$) \\
200 & 5 & 1.0 & 0.970\,(0.098) & 1.739\,(0.282) & 0.564\,(0.046) & 1.000\,(0.000) & 0.003\,(0.045) & $\approx 0$\,(${<}10^{-16}$) \\
500 & 5 & 1.0 & 0.992\,(0.064) & 1.787\,(0.188) & 0.558\,(0.030) & 1.000\,(0.000) & 0.001\,(0.029) & $\approx 0$\,(${<}10^{-16}$) \\
\bottomrule
\end{tabular}%
}
\end{table}


\paragraph{Results.} Across all eight scenarios, the OCR calibration slope and intercept are numerically $1.000$ and 0, respectively. This confirms that the two constraints imposed in \eqref{eq:SC_LM} are satisfied. However, the OLS calibration slope is strictly less than one in every scenario,
ranging from $0.335$ ($n=200$, $p=2$, $\sigma=1.0$) to $0.837$
($n=500$, $p=5$, $\sigma=0.5$). The noise level is the dominant driver of attenuation. Increasing $\sigma$ from $0.5$ to $1.0$ roughly halves the OLS calibration slope (e.g., $0.669 \to 0.335$ for $n=500$, $p=2$).

By enforcing the calibration constraint, OCR inflates MSE relative to OLS.
The constraint is modest under low noise ($\Delta\text{MSE} \approx 0.05$ for
$\sigma=0.5$, $p=5$) but substantial under high noise
($\Delta\text{MSE} \approx 2.07$ for $\sigma=1.0$, $p=2$).

\subsection{Estimation of $\theta$ via Regression of $\widehat{\y}$ on $\w$}

We further conducted a simulation study to examine downstream bias in the association between the predicted outcome and an external variable. To do this, we generated the pair of $(\y, \w)$ from a bivariate normal distribution. Specifically, in each replication, $(\y_i, \w_i)_{i=1}^{n}$ are drawn i.i.d.\ from
\begin{equation*}
  \begin{pmatrix} \y \\ \w \end{pmatrix}
  \sim
  \mathrm{MVN}\!\left(
    \mathbf{0},\;
    \boldsymbol{\Sigma}
  \right),
  \qquad
  \boldsymbol{\Sigma}
  =
  \begin{pmatrix}
    \SigY^2            & \theta\,\SigW^2 \\
    \theta\,\SigW^2   & \SigW^2
  \end{pmatrix},
  \label{eq:dgp_theta}
\end{equation*}
where $\SigY^2 = p\cdot(0.5)^2 + \sigma^2$ mirrors the data generating process in which $\y = \X\bfbeta+\varepsilon$ with
$\boldsymbol{\beta}=0.5\cdot\mathbf{1}_p$ and
$\varepsilon\sim\mathcal{N}(0,\sigma^2)$. Here, the estimand is $\theta$, the population regression coefficient of $\y$ on $\w$. We fix $\theta=0.5$ and $\SigW=1$ throughout. The covariate matrix $\X\in\mathbb{R}^{n\times p}$ was also generated to serve as prediction model features. For each method, $\widehat{\theta}$ is obtained by regressing $\widehat{\y}$ on $\w$.

Estimator performance is summarized across $500$ replications via four
metrics. The average estimate $\overline{\widehat{\theta}}$ and standard
deviation $\mathrm{sd}(\widehat{\theta})$ characterize the central
tendency and sampling variability of $\widehat{\theta}$, while bias is defined
as $\mathrm{Bias} = \overline{\widehat{\theta}} - \theta$. To quantify the
relative magnitude of bias with respect to sampling variability, we computeed
the bias-to-standard deviation ratio, $\mathrm{BSR} = |\mathrm{Bias}|\,/\,\mathrm{sd}(\widehat{\theta})$,
with values exceeding unity indicating that systematic bias dominates Monte
Carlo variability. Finally, empirical coverage is the proportion of
replications in which the nominal 95\% Wald interval
$\widehat{\theta}_m \pm 1.96\,\mathrm{se}(\widehat{\theta})$
contains the true $\theta$.

\paragraph{Results.} The results from Table~\ref{tab:theta_plugin}  are consistent with the theoretical results.
The OLS plug-in, $\thetahatols$, is severely biased toward zero in every scenario
($\mathrm{Bias}\approx -0.50$, $\mathrm{BCR}\in[26,180]$,
coverage $= 0\%$). This occurs because OLS predictions are slope attenuated: $\yhatols$ preserves only $\etahatols$ fraction of the association between $\y$ and $\w$. Thus $\thetahatols \approx \etahatols\theta$, which can be close to 0 when $\etahatols$ is small. However, the OCR plug-in, $\thetaocr$, is approximately unbiased across all scenarios
($|\mathrm{Bias}|<0.25$, $\mathrm{BCR}<0.08$,
coverage $\in[95.2\%,\, 98.6\%]$). This confirms that enforcing $\etahatocr=1$ yields an unbiased downstream association $\thetahatocr\approx\theta$ by eliminating attenuation induced prediction bias.

\begin{table}[ht]
\centering
\setlength{\tabcolsep}{4pt}
\caption{Simulation results for the plug in estimator $\widehat{\theta}$
from regressing $\widehat{\y}$ on $\w$ ($500$ replications; true $\theta=0.5$,
$\SigW=1$).
Entries in the Mean\,(SD) and Bias\,(SD) columns are
mean\,(standard deviation) over replications.
BSR: bias-to-standard deviation ratio $|\mathrm{Bias}|/\mathrm{SD}$;
values ${>}1$ indicate bias dominates sampling variability.
Coverage: empirical 95\% CI coverage (\%) of $\theta$.}
\label{tab:theta_plugin}
\resizebox{\textwidth}{!}{%
\begin{tabular}{ccc cc cc cc cc}
\toprule
& & &
\multicolumn{2}{c}{Mean\,(SD) of $\widehat{\theta}$} &
\multicolumn{2}{c}{Bias\,(SD)} &
\multicolumn{2}{c}{BSR} &
\multicolumn{2}{c}{Coverage (\%)} \\
\cmidrule(lr){4-5}\cmidrule(lr){6-7}\cmidrule(lr){8-9}\cmidrule(lr){10-11}
$n$ & $p$ & $\sigma$ & OLS & OCR & OLS & OCR & OLS & OCR & OLS & OCR \\
\midrule
200 & 2 & 0.5 & 0.005\,(0.007) & 0.576\,(2.202) & $-$0.495\,(0.007) & 0.076\,(2.202) & 72.588 & 0.035 & 0.0 & 98.6 \\
500 & 2 & 0.5 & 0.002\,(0.003) & 0.433\,(1.613) & $-$0.498\,(0.003) & $-$0.067\,(1.613) & 179.853 & 0.042 & 0.0 & 97.8 \\
200 & 5 & 0.5 & 0.012\,(0.014) & 0.503\,(0.620) & $-$0.488\,(0.014) & 0.003\,(0.620) & 36.111 & 0.006 & 0.0 & 97.8 \\
500 & 5 & 0.5 & 0.005\,(0.006) & 0.533\,(0.631) & $-$0.495\,(0.006) & 0.033\,(0.631) & 84.408 & 0.052 & 0.0 & 97.2 \\
\midrule
200 & 2 & 1.0 & 0.004\,(0.009) & 0.663\,(3.294) & $-$0.496\,(0.009) & 0.163\,(3.294) & 58.078 & 0.049 & 0.0 & 96.0 \\
500 & 2 & 1.0 & 0.002\,(0.004) & 0.257\,(3.122) & $-$0.498\,(0.004) & $-$0.243\,(3.122) & 135.156 & 0.078 & 0.0 & 96.4 \\
200 & 5 & 1.0 & 0.013\,(0.019) & 0.495\,(0.785) & $-$0.487\,(0.019) & $-$0.005\,(0.785) & 26.019 & 0.006 & 0.0 & 95.2 \\
500 & 5 & 1.0 & 0.005\,(0.007) & 0.495\,(0.801) & $-$0.495\,(0.007) & $-$0.005\,(0.801) & 73.918 & 0.007 & 0.0 & 97.0 \\
\bottomrule
\end{tabular}%
}
\end{table}

\section{Discussion}

In standard OLS regression, the estimator satisfies the classical unbiasedness
condition $\E(\widehat{Y} - Y \mid X = x) = 0$, which ensures a consistent estimation of the regression coefficients $\bfbeta$. However, this covariate conditional
unbiasedness does not imply outcome conditional unbiasedness. As we have shown,
the OLS predicted outcome is biased conditional on $Y$, i.e., $\E(\widehat{Y} - Y \mid Y = y) \neq 0$ in general, and the bias is systematic: predictions are shrunk towards the
population mean $\muY$ by a factor of $\etaols$, so that high outcomes are
underestimated and low outcomes are overestimated. This regression-to-the-mean
phenomenon is an inherent property of OLS, not a finite-sample artifact, and
it propagates into any downstream analysis that substitutes $\widehat{Y}$ for $Y$.
In particular, regressing a predicted outcome $\widehat{Y}$ on an external variable $W$ attenuates the
estimated association by a factor of $\etaols$ relative to the true coefficient
$\theta$, leading to systematic underestimation of the effect sizes and
potentially erroneous scientific conclusions.

To address this, we propose the Outcome Calibrated Regression (OCR) estimator,
which directly imposes the outcome-conditional unbiasedness condition
$\E(\widehat{Y} - Y \mid Y = y) = 0$. By construction, OCR eliminates the
regression-to-the-mean bias in the predicted outcome, so that inference based
on $\Yhatocr$ recovers the true regression coefficient
$\theta$ without attenuation. The implications are broad. OCR is relevant whenever predicted outcomes are used as surrogates for unobserved or expensive-to-measure ground-truth outcomes, including treatment-effect estimation in causal inference pipelines, two-sample Mendelian randomization, polygenic score association studies, aging clock and biological age analyses, and federated or transfer learning settings in which only model predictions are shared across sites. In these contexts, replacing OLS predictions with OCR-corrected predictions can eliminate systematic prediction bias and reduce bias in downstream inference attributable to RTM.
\bibliographystyle{apalike} 
\bibliography{ref}

@book{seber2003linear,
  title     = {Linear Regression Analysis},
  author    = {Seber, George A. F. and Lee, Alan J.},
  edition   = {2nd},
  year      = {2003},
  publisher = {Wiley},
  address   = {Hoboken, NJ}
}

@book{hastie2009elements,
  title     = {The Elements of Statistical Learning: Data Mining,
               Inference, and Prediction},
  author    = {Hastie, Trevor and Tibshirani, Robert and Friedman, Jerome},
  edition   = {2nd},
  year      = {2009},
  publisher = {Springer},
  address   = {New York}
}

@book{montgomery2021introduction,
  title     = {Introduction to Linear Regression Analysis},
  author    = {Montgomery, Douglas C. and Peck, Elizabeth A. and Vining, G. Geoffrey},
  edition   = {6th},
  year      = {2021},
  publisher = {Wiley},
  address   = {Hoboken, NJ}
}

\end{document}